\newtheorem{theorem}{Theorem}
\newtheorem{lemma}[theorem]{Lemma}
\newtheorem{definition}[theorem]{Definition}
\newtheorem{corollary[theorem]}{Corollary}
\newcommand{\nats}{\mathbb N}
\newcommand{\solnlabel}[1]{\label{sol:#1}}
\newcommand{\solnref}[1]{\ref{sol:#1}}
\newcommand{\blank}{\ensuremath{\mathtt{*}}}
\def\problem#1{{\scshape #1}}
\def\tarski{\problem{Tarski}\xspace}
\def\utarski{\problem{Unique-Tarski}\xspace}
\def\sutarski{\problem{Super-Unique-Tarski}\xspace}
\def\OPDC{\problem{OPDC}\xspace}
\newcommand{\up}{\ensuremath{\mathsf{up}}}
\newcommand{\down}{\ensuremath{\mathsf{down}}}
\newcommand{\zero}{\ensuremath{\mathsf{zero}}}
\DeclareMathOperator{\Up}{Up}
\DeclareMathOperator{\Down}{Down}
\title{Super Unique Tarski is in UEOPL}
\author[1]{John Fearnley}
\author[1,2]{Rahul Savani}
\affil[1]{University of Liverpool}
\affil[2]{Alan Turing Institute}
\date{}
\begin{document}

\maketitle

\begin{abstract}
We define the \sutarski problem, which is a \tarski instance in which all slices
are required to have a unique fixed point. We show that \sutarski lies in UEOPL
under promise-preserving reductions.
\end{abstract}

\section{Introduction}

A function $f: \{1, 2, \dots, n\}^k \rightarrow \{1, 2, \dots, n\}^k$ is
\emph{monotone} if $f(x) \le f(y)$ whenever $x \le y$.
Tarski's theorem~\cite{Tarski55} states that every monotone function has at
least one fixed point, which is a point $x$ such that $x = f(x)$. 
The \tarski problem is therefore defined to be the problem of finding a fixed
point of $f$. 

In this paper, we are interested in the computational complexity of the \tarski
problem. Prior work has shown that the problem lies in the complexity classes
PPAD and PLS~\cite{EPRY20}, which also means that the problem lies in
CLS~\cite{FearnleyGHS23}.

In particular, we are inspired by a recent result of Chen et
al.\ showing that the \tarski problem can be reduced, in the
black box model, to the \utarski problem~\cite{ChenLY23}, which captures \tarski instances that 
possess a unique fixed point. They explicitly ask whether \utarski can be shown to lie
in the complexity class UEOPL, which is a sub-class of CLS that was defined to
capture problems that have unique solutions~\cite{FGMS20}.

We do not resolve that problem here, but we do show that one further restriction
does allow us to show UEOPL containment. A \emph{slice} of a \tarski instance
is obtained by fixing some of the coordinates of the instance, and then
projecting $f$ onto the resulting sub-instance (this is defined formally in
Section~\ref{sec:sutarski}). Our contributions are as follows.

\begin{itemize}
\item We define the \sutarski problem, which captures \tarski instances that
have a unique fixed point, and in which \emph{every slice} of the instance also has a
unique fixed point.

\item We show that the \sutarski problem lies in UEOPL under promise-preserving
reductions.
\end{itemize}

\section{Definitions}

\paragraph{\bf The Tarski fixed point problem.}

We work with a complete lattice defined over a $k$-dimensional
grid of points, where each side of the grid has length $n \in \nats$. Formally,
we define our lattice to be $L^k = \{1, 2, \dots, n\}^k$. We use $\le$ to
denote the natural partial order over this lattice: we have that $x \le y$
if and only if $x, y \in L^k$ and $x_i \le y_i$ for all $i \le k$. 

A function $f : L^k \rightarrow L^k$ is \emph{monotone} if $f(x) \le
f(y)$ whenever $x \le y$. A point $x \in L^k$ is a \emph{fixed point} of $f$ if
$f(x) = x$. A fixed point $x$ of $f$ is the \emph{least} fixed
point if $x \le y$ for every other fixed point $y$ of $f$. Likewise, a fixed
point $x$ is the \emph{greatest} fixed point if $y \le x$ for every other
fixed point $y$ of $f$.
Tarski's fixed point theorem can be stated as follows.
\begin{theorem}[\cite{Tarski55}]
Every monotone function on a complete lattice has a greatest and least
fixed point.
\end{theorem}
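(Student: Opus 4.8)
The plan is to follow the classical Knaster--Tarski argument: the least fixed point is obtained as the infimum of the set of all points that lie (weakly) above their own image, and the greatest fixed point is obtained dually as the supremum of the points lying below their image. Write $\bot$ and $\top$ for the least and greatest elements of the lattice; these exist precisely because the lattice is complete (they are the meet and join of the empty set, or equivalently of the whole lattice).

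For the least fixed point I would introduce the set $P = \{x : f(x) \le x\}$ of \emph{pre-fixed points}. It is non-empty since $f(\top) \le \top$, so $\top \in P$, and by completeness $a := \bigwedge P$ exists. The crucial step is to check that $a$ is actually a fixed point, not merely a pre-fixed point. On the one hand, for every $x \in P$ we have $a \le x$, so monotonicity gives $f(a) \le f(x) \le x$; hence $f(a)$ is a lower bound of $P$ and therefore $f(a) \le \bigwedge P = a$. On the other hand, applying $f$ to this inequality and using monotonicity again gives $f(f(a)) \le f(a)$, so $f(a) \in P$, whence $a = \bigwedge P \le f(a)$. The two inequalities force $f(a) = a$. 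Finally, any fixed point $y$ satisfies $f(y) = y \le y$, so $y \in P$ and thus $a \le y$; this shows $a$ is the least fixed point.

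The greatest fixed point is handled symmetrically, working with the set $Q = \{x : x \le f(x)\}$ of \emph{post-fixed points}: $\bot \in Q$ so $Q$ is non-empty, $b := \bigvee Q$ exists by completeness, and the same two-sided squeeze shows first $b \le f(b)$ (each $x \in Q$ satisfies $x \le f(x) \le f(b)$, so $f(b)$ is an upper bound of $Q$) and then $f(b) \le b$ (since $b \le f(b)$ gives $f(b) \le f(f(b))$, so $f(b) \in Q$), hence $f(b) = b$; and every fixed point lies in $Q$ and therefore below $b$.

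I do not expect a genuine obstacle, since the statement is classical; the only point requiring care is the two-sided squeeze that upgrades $a$ from a pre-fixed point to a true fixed point (and dually for $b$). It is worth noting that in the finite-grid setting $L^k = \{1, 2, \dots, n\}^k$ used throughout this paper one can bypass lattice-theoretic infima altogether: the chain $\bot \le f(\bot) \le f^2(\bot) \le \cdots$ is increasing by monotonicity, so on a finite lattice it stabilises after finitely many steps at some $x = f^m(\bot)$ with $f(x) = x$; an easy induction (if $y$ is a fixed point then $f^i(\bot) \le y$ for all $i$) shows $x$ is the least fixed point, and the dual iteration starting from $\top$ yields the greatest. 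Either route establishes the theorem.
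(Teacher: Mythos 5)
Your proof is correct. The paper itself gives no proof of this theorem --- it is stated with a citation to Tarski's original work --- so there is nothing to diverge from; what you have written is the standard Knaster--Tarski argument, and both halves (the two-sided squeeze upgrading $a = \bigwedge\{x : f(x) \le x\}$ from a pre-fixed point to a fixed point, and its dual) are carried out correctly. One remark worth making: your sets $P$ and $Q$ are exactly the sets $\Down(f)$ and $\Up(f)$ that the paper introduces later, and your argument shows not only that $a$ and $b$ are the least and greatest fixed points but also that $a \le x$ for all $x \in \Down(f)$ and $x \le b$ for all $x \in \Up(f)$ --- which is precisely the strengthened statement of Theorem~\ref{thm:tarski2} that the paper also cites without proof and on which Lemma~\ref{lem:non-unique} depends. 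So your proof delivers slightly more than was asked, for free. Your closing observation that the finite grid $L^k$ permits the simpler iterative argument ($\bot \le f(\bot) \le f^2(\bot) \le \cdots$ stabilises) is also correct and is the version most relevant to the computational setting of the paper, though the general lattice-theoretic proof is what the theorem as stated requires.
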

In particular, this implies that every monotone function has at least
one fixed point.
Thus, we can define a total search problem for Tarski's fixed point theorem.
\begin{definition}[\tarski]
Given a function $f : L^k \rightarrow L^k$, find one of the following.
\begin{enumerate}[label=(T\arabic*), wide=0pt]
\item \solnlabel{T1} A point $x \in L^k$ such that $f(x) = x$.
\item \solnlabel{T2} Two points $x, y \in L^k$ such that $x \le y$ and $f(x) \not\le
f(y)$.
\end{enumerate}
\end{definition}
Solutions of type \solnref{T1} are fixed points of the function $f$, whereas
solutions 
of type \solnref{T2} are succinct witnesses that $f$ is not monotone.
By Tarski's fixed point theorem, if a function $f$ has no solutions of type
\solnref{T2}, then it must have a solution of type \solnref{T1}, and so \tarski
is a total problem, meaning that it always has a solution.

\section{The \sutarski Problem}
\label{sec:sutarski}

We now define our restriction on the \tarski problem.

\paragraph{\bf \tarski instances with unique fixed points.}

The proof of Tarski's fixed point theorem actually gives us some stronger
properties. Given a function $f$ over a lattice
$L^k$, we define the following two sets.
\begin{align*}
\Up(f) &= \{ x \in L^k \; : \; x \le f(x) \}, \\
\Down(f) &= \{ x \in L^k \; : \; f(x) \le x \}.
\end{align*}
We call $\Up(f)$, the \emph{up} set, which contains all points in which $f$
moves upward
according to the ordering~$\le$, and likewise we call $\Down(f)$ the
\emph{down} set. Note that the set of fixed points of $f$ is exactly $\Up(f) \cap
\Down(f)$. A stronger version of Tarski's fixed point theorem can be
stated as follows.
\begin{theorem}[\cite{Tarski55}]
\label{thm:tarski2}
Let $f$ be a monotone function, let $l$ be the least fixed point of
$f$, and let $g$ be the greatest fixed point of $f$. We have
\begin{itemize}
\item $l \le x$ for all $x \in \Down(f)$, and
\item $x \le g$ for all $x \in \Up(f)$.
\end{itemize}
\end{theorem}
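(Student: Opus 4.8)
The plan is to identify the least fixed point $l$ with the infimum of the down set $\Down(f)$ — the classical Knaster–Tarski argument — and then obtain the statement about $\Up(f)$ and $g$ by the order-theoretic dual. Since $L^k$ is finite it is a complete lattice, so $m := \inf \Down(f)$ exists in $L^k$, and by definition of infimum $m \le x$ for every $x \in \Down(f)$. It therefore suffices to prove $m = l$.

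First I would check that $m$ itself lies in $\Down(f)$. For any $x \in \Down(f)$ we have $m \le x$, so monotonicity gives $f(m) \le f(x) \le x$; hence $f(m)$ is a lower bound of $\Down(f)$, and since $m$ is the greatest such lower bound, $f(m) \le m$, i.e.\ $m \in \Down(f)$. Next, applying monotonicity to $f(m) \le m$ gives $f(f(m)) \le f(m)$, so $f(m) \in \Down(f)$ as well, and therefore $m \le f(m)$ because $m$ is a lower bound of $\Down(f)$. Antisymmetry now yields $f(m) = m$, so $m$ is a fixed point. Since any fixed point $y$ trivially satisfies $f(y) = y \le y$ and hence belongs to $\Down(f)$, we get $m \le y$ for every fixed point $y$; thus $m$ is the least fixed point, $m = l$, and we conclude $l \le x$ for all $x \in \Down(f)$.

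For the second bullet I would run the dual argument with $M := \sup \Up(f)$: reversing every inequality shows $f(M) = M$, that $M$ dominates every fixed point, hence $M = g$, and that $x \le M$ for all $x \in \Up(f)$. Equivalently, this bullet follows from the first by conjugating $f$ with the order-reversing bijection $\sigma(x)_i = n+1-x_i$ of $L^k$, under which one checks $\Down(\sigma f \sigma) = \sigma(\Up(f))$ and that the least fixed point of $\sigma f \sigma$ is $\sigma(g)$.

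I do not anticipate a genuine obstacle: the proof is short and elementary. The only points needing care are getting the directions of the two monotonicity applications right in the proof that $m \in \Down(f)$ and then that $f(m) \in \Down(f)$, and noting that the theorem presupposes the existence of $l$ and $g$ — guaranteed by the first form of Tarski's theorem stated above — so that what the argument really contributes is the identification $l = \inf \Down(f)$, $g = \sup \Up(f)$ together with the resulting global bounds. A slick alternative that avoids infima and suprema altogether is: given $x \in \Down(f)$, note that $f$ maps the interval $\{y \in L^k : y \le x\}$ into itself (since $y \le x$ implies $f(y) \le f(x) \le x$), apply Tarski's theorem on that sub-lattice to obtain a fixed point $z \le x$, and then $l \le z \le x$; symmetrically for $\Up(f)$ and $g$.
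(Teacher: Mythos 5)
Your proof is correct. The paper states this theorem only as a citation to Tarski's original work and supplies no proof of its own, so there is nothing to compare against; your argument is the standard Knaster--Tarski construction (identifying $l$ with $\inf \Down(f)$, showing that infimum is itself a fixed point, and dualizing for $\Up(f)$ and $g$), and the alternative sub-lattice argument you sketch at the end is equally sound.
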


\noindent Note that if $f$ has a \emph{unique} fixed point $p$, then $p$ is
simultaneously the greatest and least fixed point. The theorem above implies
that $p$ must lie exactly between the sets $\Up(f)$ and $\Down(f)$:
specifically $x \le p$ for all $x \in \Up(f)$, and $p \le x$ for all $x
\in \Down(f)$.

This gives us a succinct witness that a \tarski instance does not have a
unique fixed point, 
which has already been observed by Chen et al.~\cite{ChenLY23}.
\begin{lemma}[\cite{ChenLY23}]
\label{lem:non-unique}
Let $f$ be a monotone function. There is a unique fixed point of $f$ if
and only if there is no pair of points $x, y \in L^k$ such that $x \not\le y$
and $x \in \Up(f)$ and $y \in \Down(f)$.
\end{lemma}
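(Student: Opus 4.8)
The plan is to prove the two implications separately, using Theorem~\ref{thm:tarski2} for the forward direction and a short contradiction argument, together with Tarski's existence theorem, for the converse.

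For the forward direction, I would assume that $f$ has a unique fixed point $p$. Then $p$ is simultaneously the least and the greatest fixed point of $f$, so Theorem~\ref{thm:tarski2} yields $x \le p$ for every $x \in \Up(f)$ and $p \le y$ for every $y \in \Down(f)$. Consequently, for any $x \in \Up(f)$ and any $y \in \Down(f)$ we get $x \le p \le y$, and hence $x \le y$. Taking the contrapositive, there can be no pair $x, y \in L^k$ with $x \in \Up(f)$, $y \in \Down(f)$ and $x \not\le y$, which is exactly what is required.

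For the converse, I would assume that no such pair exists and show that $f$ has exactly one fixed point. Existence is immediate: $f$ is monotone, so Tarski's fixed point theorem gives at least one fixed point, so it remains to prove uniqueness. Suppose toward a contradiction that $a$ and $b$ are two distinct fixed points. Since the fixed-point set of $f$ equals $\Up(f) \cap \Down(f)$, both $a$ and $b$ lie in $\Up(f) \cap \Down(f)$. Because $\le$ is a partial order and $a \neq b$, we cannot have both $a \le b$ and $b \le a$; without loss of generality suppose $a \not\le b$ (the other case is symmetric after swapping $a$ and $b$). Then $x := a \in \Up(f)$ and $y := b \in \Down(f)$ satisfy $x \not\le y$, contradicting our assumption. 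Hence $f$ has a unique fixed point.

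I do not anticipate any genuine obstacle here; the lemma is essentially a repackaging of Theorem~\ref{thm:tarski2} and antisymmetry of $\le$. The only point worth stating carefully is the role of monotonicity, which enters solely to guarantee the existence of a fixed point (via Tarski) in the converse direction, while the "at most one'' part of uniqueness follows from the combinatorial hypothesis alone.
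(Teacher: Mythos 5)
Your proof is correct and follows essentially the same route as the paper: the forward direction via Theorem~\ref{thm:tarski2} applied to the unique fixed point as both least and greatest, and the converse by observing that two distinct fixed points both lie in $\Up(f) \cap \Down(f)$ and, by antisymmetry, one of the two orderings must fail, yielding the forbidden pair. Your explicit remark that existence comes from Tarski's theorem is a small clarification the paper leaves implicit, but it does not change the argument.
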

\begin{proof}
If $f$ has a unique fixed point $p \in L^k$, then for any points $x \in \Up(f)$ and
$y \in \Down(f)$, we can apply Theorem~\ref{thm:tarski2} to prove that
$x \le p \le y$, and therefore $x \le y$. This rules out the
existence of points $x, y \in L^k$ with $x \not\le y$ satisfying 
$x \in \Up(f)$ and $y \in \Down(f)$.

For the other direction, suppose there are two distinct fixed points $u, v \in L^k$.
Since $u$ and $v$ are distinct, we cannot have both $u \le v$ and $v \le u$. 
Thus we can either take $x$ as $u$ and $y$ as $v$, or vice versa, so that
$x \not \le y$. 
Since $u$ and $v$ are fixed, both are in $\Up(f) \cap \Down(f)$,
and in particular we will have $x \in \Up(f)$ and $y \in \Down(f)$,
which completes the proof.
\end{proof}

\paragraph{\bf The \sutarski problem.}

Our results will focus on a special case of the \tarski problem,
in which all \emph{slices} have unique solutions.
A slice of the lattice $L^k$ is defined by a tuple $s = (s_1, s_2, \dots,
s_k)$, where each element $s_i \in \{1, 2, \dots, n\} \cup \{ \blank \}$. The idea is that, if $s_i \ne
\blank$, then we fix dimension $i$ of $L^k$ to be $s_i$, and if $s_i = \blank$,
then we allow dimension $i$ of $L^k$ to be free. Formally, we define the sliced
lattice $L^k_s = \{ x \in L^k \; : \; x_i = s_i \text{ whenever } s_i \ne \blank
\}$.

Given a slice $s$, and a function $f : L^k \rightarrow L^k$, we define $f_s : L^k_s
\rightarrow L^k_s$ to be the \emph{restriction} of $f$ to the slice $s$. Specifically, for
each $x \in L^k_s$, we define 
\begin{equation*}
\left(f_s(x)\right)_i = \begin{cases}
f_i(x) & \text{if $s_i = \blank$,} \\
s_i & \text{otherwise.} 
\end{cases}
\end{equation*}

The following lemma shows that monotonicity is hereditary across slices.
\begin{lemma}
\label{lem:slice-op}
If $f$ is monotone, then $f_s$ is also monotone for every slice $s$.
\end{lemma}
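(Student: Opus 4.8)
The plan is a direct verification from the definitions. Fix an arbitrary slice $s$ and suppose $f$ is monotone. Take any two points $x, y \in L^k_s$ with $x \le y$; I must show $f_s(x) \le f_s(y)$, which by definition of the order $\le$ means checking $\bigl(f_s(x)\bigr)_i \le \bigl(f_s(y)\bigr)_i$ for every coordinate $i \le k$.

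The key observation is simply that $L^k_s \subseteq L^k$: a point of the sliced lattice is in particular a point of the full lattice. Hence $x \le y$ also holds as an inequality in $L^k$, and monotonicity of $f$ on $L^k$ yields $f(x) \le f(y)$, i.e.\ $f_i(x) \le f_i(y)$ for all $i$. Now I split on the coordinate $i$ according to the definition of $f_s$. If $s_i \ne \blank$, then $\bigl(f_s(x)\bigr)_i = s_i = \bigl(f_s(y)\bigr)_i$, so the required inequality holds with equality. If $s_i = \blank$, then $\bigl(f_s(x)\bigr)_i = f_i(x)$ and $\bigl(f_s(y)\bigr)_i = f_i(y)$, and we have already established $f_i(x) \le f_i(y)$. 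Either way the coordinatewise inequality holds, so $f_s(x) \le f_s(y)$.

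It is also worth recording, for the statement to typecheck, that $f_s$ does map $L^k_s$ into $L^k_s$: for every coordinate $i$ with $s_i \ne \blank$ we have $\bigl(f_s(x)\bigr)_i = s_i$, which is exactly the membership condition for $L^k_s$; this is part of the definition of the restriction but I would mention it in passing.

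There is essentially no obstacle here: the only thing to get right is the bookkeeping of the two cases in the definition of $f_s$ and the use of the inclusion $L^k_s \subseteq L^k$ to transfer monotonicity of $f$ down to the slice. I expect the proof to be a single short paragraph, and the same case analysis will presumably be reused later when comparing $\Up(f_s)$, $\Down(f_s)$ with the corresponding sets for $f$.
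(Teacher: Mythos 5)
Your proof is correct and follows essentially the same coordinatewise case analysis as the paper's own proof; the only cosmetic difference is that for coordinates with $s_i \ne \blank$ you note the two values are both equal to $s_i$, whereas the paper writes $f_s(x)_i = x_i \le y_i = f_s(y)_i$ (these coincide since $x_i = s_i = y_i$ on the slice).
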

\begin{proof}
Let $x \le y$ be two points in $L^k_s$. For each dimension $i$ with $s_i =
\blank$ we have 
\begin{equation*}
f_s(x)_i = f(x)_i \le f(y)_i = f_s(y)_i,
\end{equation*}
where the first and third equalities hold by definition, and the middle
inequality holds due to the fact that $f$ is monotone.
For each dimension $i$ with $s_i \ne \blank$ we have 
\begin{equation*}
f_s(x)_i = x_i \le y_i = f_s(y)_i,
\end{equation*}
where again the first and third equalities hold by definition, and the middle
inequality holds due to the fact that $x \le y$. Hence we have shown that
$f_s(x) \le f_s(y)$.
\end{proof}
If $f$ is monotone, then this lemma says that we can apply Tarski's
theorem to each slice's function $f_s$, meaning that $f_s$ must also have at
least one fixed point. However, the property of having a unique fixed point 
is not hereditary: even if $f$ has a unique fixed point, there
may still be slices $s$ such that $f_s$ has many fixed points.

In our containment results, we need the property that all slices have a
unique fixed point, and for that reason we introduce the \sutarski problem, which
requires that \emph{all} slices should have unique fixed points.

\begin{definition}[\sutarski]
Given a function $f : L^k \rightarrow L^k$, find one of the following.
\begin{enumerate}[label=(UT), wide=0pt]
\item \solnlabel{UT} A point $x \in L^k$ such that $f(x) = x$.
\end{enumerate}
\begin{enumerate}[label=(UTV\arabic*), wide=0pt]
\item \solnlabel{UTV1} Two points $x, y \in L^k$ such that $x \le y$ and $f(x) \not\le
f(y)$.
\item \solnlabel{UTV2} A slice $s$ and two points $x, y \in L^k_s$ such that $x
\not\le y$ and $x \in \Up(f_s)$ while $y \in \Down(f_s)$.
\end{enumerate}
\end{definition}
Here we have split the solution types into the \emph{proper} solution
\solnref{UT} which asks for a fixed point of $f$, and \emph{violation} solutions \solnref{UTV1} and \solnref{UTV2}.
Solutions of type \solnref{UTV1} ask for a violation of monotonicity in
exactly the same way as in the \tarski problem.
Solutions of type \solnref{UTV2} use the characterisation of
Lemma~\ref{lem:non-unique} to give a succinct witness that $f_s$ does not have a
unique fixed point for some slice $s$. Note that $(\blank, \blank, \dots,
\blank)$ is also a slice, so a violation-free instance will also have a unique
fixed point.

\section{\sutarski is in UEOPL}

To show that \sutarski is in UEOPL, we will provide a polynomial time reduction
from \sutarski to the \emph{one-permutation discrete contraction} 
(\OPDC) problem, which is defined as follows. This problem is also defined
over a grid of points, so we will reuse the notation $L^k$ to refer to the set
$\{1,2,\dots,n\}^k$. The \OPDC problem uses the concept of an \emph{$i$-slice},
which is a slice $s$ such that $s_j = \blank$ for all $j \le i$. That is, an
$i$-slice only fixes all coordinates after index $i$, and leaves all
coordinates whose indexes are less than or equal to $i$ as free dimensions.

\begin{definition}[\OPDC (\cite{FGMS20})]
\label{def:dc}
Given Boolean circuits $(D_i(p))_{i=1,\dots,k}$,
where each circuit defines a function $D_i : L^k \rightarrow \{\up, \down, \zero\}$, find one of the following
\begin{enumerate}[label=(O1),wide=0pt]
\item \solnlabel{O1} A point $x \in L^k$ such that $D_i(x) = \zero$ for all $i$.
\end{enumerate}
\begin{enumerate}[label=(OV\arabic*),wide=0pt,leftmargin=\parindent]
\item \solnlabel{OV1} An $i$-slice $s$ and two points $x, y \in L^k_s$ with $x \ne y$
such that $D_i(x) = D_i(y) = \zero$ for all $i$ with $s_i = \blank$.
\item \solnlabel{OV2} An $i$-slice $s$, two points $x, y \in L^k_s$, and an index $i$ such that
\begin{itemize}
\item $D_j(x) = D_j(y) = \zero$ for all $j \ne i$, 
\item $y_i = x_i + 1$, and
\item $D_i(x) = \up$ and $D_i(y) = \down$.
\end{itemize}
\item \solnlabel{OV3} An $i$-slice $s$, a point $x \in L^k_s$, and an index $i$ such that
\begin{itemize}
\item $D_j(x) = \zero$ for all $j \ne i$, and either
\item $x_i = 1$ and $D_i(x) = \down$, or
\item $x_i = n$ and $D_i(x) = \up$.
\end{itemize}
\end{enumerate}
\end{definition}


The OPDC problem has been shown to be complete for the complexity class UEOPL,
though we will only need the containment of OPDC in UEOPL for our result.
Moreover, this has been shown for \emph{promise-preserving} reductions. 
If problems A and B both have proper solutions and violation solutions, then a
reduction from A to B is promise-preserving if, when it is applied to an
instance of A that has no violations, it produces an instance of B
that has no violations. For \OPDC, solutions of type \solnref{O1} are the
proper solutions, while solutions of type \solnref{OV1}, \solnref{OV2}, and
\solnref{OV3} are the violations.

\begin{lemma}[\cite{FGMS20}]
\OPDC is UEOPL-complete under promise-preserving reductions.
\end{lemma}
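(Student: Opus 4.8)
Since this statement is quoted from prior work \cite{FGMS20}, I will only sketch the structure of its proof. Establishing UEOPL-completeness requires two things: that \OPDC lies in UEOPL, and that every problem in UEOPL reduces to \OPDC, both under promise-preserving reductions. The class UEOPL is defined through a canonical complete problem — a promise version of ``unique end of potential line'' — in which one is given successor and predecessor circuits together with a potential function that strictly increases along the line, a designated source vertex, and one seeks the (unique) end of the line; the violation solutions of that problem witness failures of the line-structure promise. The plan is therefore to exhibit promise-preserving reductions in both directions against this canonical problem.

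For the hardness direction (the canonical line problem reduces to \OPDC), the plan is to embed an instance of the line problem into a grid $L^k$ and to define direction circuits $(D_i)$ so that the unique all-\zero point is forced to coincide with the end of the line. Concretely, the labels \up, \down, and \zero are chosen to encode a descent: moving in a coordinate direction that follows the line is labelled \up, moving against it \down, and the terminal configuration is the only point at which every coordinate evaluates to \zero. The crucial requirement is that each failure of the line promise maps to one of the violation solutions \solnref{OV1}, \solnref{OV2}, or \solnref{OV3}, so that a promise-satisfying line instance is sent to a promise-satisfying \OPDC instance; this correspondence is exactly what makes the reduction promise-preserving.

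For the membership direction (\OPDC reduces to the canonical problem), the plan is to build the line from the nested-slice descent that is implicit in the one-permutation structure. Under the promise, within any $i$-slice the restricted direction $D_i$ has a single zero-crossing, so a binary search in coordinate $i$ locates the unique value where $D_i = \zero$; fixing that coordinate yields an $(i-1)$-slice, and the procedure recurses down through the dimensions. The states of this hierarchical search become the vertices of the line, ordered lexicographically by search progress, so that arriving at a configuration with all $D_i = \zero$ is exactly arriving at the end of the line. One then checks that each of the three \OPDC violation types corresponds to a line-promise violation — a second zero inside a slice (\solnref{OV1}) forces a branching of the line, an \up/\down inversion across adjacent grid points (\solnref{OV2}) or a boundary pointing outward (\solnref{OV3}) forces a potential irregularity — which again keeps the reduction promise-preserving.

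The main obstacle is the membership direction. The difficulty is that the descent is genuinely hierarchical: the successor and predecessor circuits must simulate a $k$-level recursion (a binary search in the top coordinate, a descent by one dimension, and so on) while remaining of polynomial size, and the whole traversal must be threaded by a single integer potential of polynomial bit-length that strictly increases along it. The delicate bookkeeping, carried out in full in \cite{FGMS20}, is to verify that every irregularity of the \OPDC instance is faithfully reflected as a line-promise violation, with no spurious end-of-line vertices introduced, so that proper solutions and violations are preserved in both directions.
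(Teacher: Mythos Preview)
The paper does not prove this lemma at all: it is stated as a citation of prior work \cite{FGMS20} and is used as a black box, so there is no ``paper's own proof'' to compare against. Your decision to treat it as external and give only a sketch is therefore exactly right, and in fact your sketch already goes well beyond what the present paper provides.

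As a side remark on the content of your sketch: the membership direction you describe (a hierarchical binary search through $i$-slices, threaded into a single potential line) is indeed the shape of the argument in \cite{FGMS20}, and your identification of the bookkeeping---matching each \OPDC violation type to a line-promise violation without creating spurious endpoints---as the delicate part is accurate. The hardness direction in \cite{FGMS20} is actually obtained by passing through intermediate problems rather than by a direct embedding of the line into a grid as you suggest, but since the paper under review only needs the \emph{containment} of \OPDC in UEOPL (as it explicitly notes just before the lemma), the hardness direction is in any case irrelevant here.
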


\paragraph{\bf The reduction.}

We now give a polynomial-time promise-preserving reduction from \sutarski to
\OPDC.
Let $\mathcal{U}$ be an instance of \sutarski defined by
$f : L^k \rightarrow L^k$. We construct an instance $\mathcal{O}$ of
\OPDC over the same lattice $L^k$ in the following way. For each index $i$ and point
$x \in L^k$ we define
\begin{equation*}
D_i(x) = \begin{cases}
\up & \text{if $x_i < f(x)_i$,} \\
\zero &  \text{if $x_i = f(x)_i$,} \\
\down & \text{if $x_i > f(x)_i$.} 
\end{cases}
\end{equation*}
This means that $D_i$ captures, for each point $x$, whether $f(x)$ moves up,
down, or not at all in dimension $i$. Given a circuit representing $f$, we can
clearly produce the circuits $D_i$ in polynomial time. The following lemma
implies that the reduction is correct.

\begin{lemma}
Every solution of $\mathcal{O}$ can be mapped in polynomial time to a solution of $\mathcal{U}$. 
Furthermore, proper solutions of $\mathcal{O}$ are mapped to
proper solutions of $\mathcal{U}$, while violation solutions of $\mathcal{O}$ are
mapped to violation solutions of $\mathcal{U}$.
\end{lemma}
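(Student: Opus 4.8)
The plan is to establish a tight correspondence between the \OPDC instance $\mathcal{O}$ and the \sutarski instance $\mathcal{U}$, solution type by solution type. The crucial observation is that, by the definition of $D_i$, for any point $x \in L^k$ we have $D_i(x) = \zero$ for all $i$ precisely when $f(x)_i = x_i$ for all $i$, i.e.\ exactly when $x$ is a fixed point of $f$. This immediately handles proper solutions: a solution of type \solnref{O1} is a point $x$ with $D_i(x) = \zero$ for all $i$, hence $f(x) = x$, which is a solution of type \solnref{UT}. More generally, the key technical link I would record first is a translation between the $\up/\down/\zero$ values of the sliced circuits and the $\Up/\Down$ sets of the sliced functions: for a slice $s$ and a point $x \in L^k_s$, having $D_i(x) \in \{\up, \zero\}$ for all $i$ with $s_i = \blank$ is equivalent to $x \le f_s(x)$ (i.e.\ $x \in \Up(f_s)$), and similarly $D_i(x) \in \{\down, \zero\}$ for all free $i$ is equivalent to $x \in \Down(f_s)$; in both cases this uses that $f_s$ agrees with $f$ on the free coordinates and is the identity on the fixed ones. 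I would also note that in an $i$-slice, coordinates $j \le i$ are free, so the condition ``$s_j = \blank$ for $j \ne i$ with $j \le i$'' plus ``$s_i = \blank$'' covers all free coordinates.

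Next I would dispatch the three violation types of \OPDC. For type \solnref{OV1}: we are given an $i$-slice $s$ and distinct $x, y \in L^k_s$ with $D_j(x) = D_j(y) = \zero$ for all free $j$; by the translation above, $x$ and $y$ are both fixed points of $f_s$, and since $x \ne y$ they are two distinct fixed points, so Lemma~\ref{lem:non-unique} (or rather its proof) lets us reorder them into a pair witnessing $x' \not\le y'$, $x' \in \Up(f_s)$, $y' \in \Down(f_s)$ — a solution of type \solnref{UTV2}. For type \solnref{OV2}: we have $x, y \in L^k_s$ with $D_j(x) = D_j(y) = \zero$ for all $j \ne i$, with $y_i = x_i + 1$, $D_i(x) = \up$, $D_i(y) = \down$. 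Then $x \le f_s(x)$ (equality in all coordinates $j \ne i$, strict up in coordinate $i$), so $x \in \Up(f_s)$, and likewise $y \in \Down(f_s)$; and since $x_i < y_i$ but $x_j = y_j$ for $j \ne i$ we have $x \le y$ — wait, that gives $x \le y$, not $x \not\le y$, so here I instead use monotonicity: from $x \le y$ and $f_s$ monotone we would get $f_s(x) \le f_s(y)$, but $f_s(x)_i > x_i = y_i - 1$, i.e.\ $f_s(x)_i \ge y_i$, while $f_s(y)_i < y_i$, so $f_s(x)_i > f_s(y)_i$, contradicting monotonicity of $f_s$. Unwinding $f_s$ back to $f$, this produces two points (agreeing with $x,y$ on free coordinates, with the fixed coordinates of $s$ plugged in) that witness a monotonicity violation of $f$ itself, i.e.\ a solution of type \solnref{UTV1}. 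For type \solnref{OV3}: a point $x \in L^k_s$ with $D_j(x) = \zero$ for $j \ne i$ and either ($x_i = 1$, $D_i(x) = \down$) or ($x_i = n$, $D_i(x) = \up$). In the first case $f(x)_i < x_i = 1$ is impossible since $f$ maps into $L^k$; in the second $f(x)_i > x_i = n$ is likewise impossible. So this case never arises, and if the reduction is ever handed such an input we can output an arbitrary solution (the instance is degenerate) — but more cleanly, I would argue \solnref{OV3} solutions simply cannot exist in $\mathcal{O}$, which also makes the promise-preservation trivial for this type.

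Finally I would check promise-preservation globally: if $\mathcal{U}$ has no violations, then $f$ is monotone and $f_s$ has a unique fixed point for every slice $s$ (including every $i$-slice, since $i$-slices are a special case of slices). Monotonicity of $f$ rules out \solnref{OV2} being mapped to a genuine \solnref{UTV1} — so no \solnref{OV2} solution can exist; uniqueness of fixed points on each $i$-slice rules out \solnref{OV1} (two distinct fixed points of $f_s$ would, via the argument above, give a \solnref{UTV2} violation); and \solnref{OV3} cannot occur at all as noted. Hence $\mathcal{O}$ has no violations, so the reduction is promise-preserving. The main obstacle, and the part needing the most care, is the \solnref{OV1} case: the \OPDC violation only gives us $x \ne y$, so I must invoke the combinatorial step from the proof of Lemma~\ref{lem:non-unique} (two distinct fixed points are incomparable in at least one direction, after possibly swapping them) to manufacture the required $x \not\le y$ orientation, and I must be careful that $x$ and $y$ both lie in $L^k_s$ so that $x \in \Up(f_s)$, $y \in \Down(f_s)$ is the honest statement — the translation between $D$-values and $f_s$ makes this bookkeeping routine but it is where an off-by-one in ``which coordinates are free in an $i$-slice'' would bite.
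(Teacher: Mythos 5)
Your handling of \solnref{O1}, \solnref{OV1}, and \solnref{OV3} is correct and matches the paper, but your \solnref{OV2} case has a genuine gap. You write ``since $x_i < y_i$ but $x_j = y_j$ for $j \ne i$ we have $x \le y$'', and the whole mapping to \solnref{UTV1} hinges on this. However, the definition of \solnref{OV2} in the paper does \emph{not} guarantee $x_j = y_j$ for $j \ne i$: it only requires $D_j(x) = D_j(y) = \zero$ for all $j \ne i$. On the fixed coordinates of the $i$-slice ($j > i$) the points do agree, but on the free coordinates $j < i$ the condition $D_j(x) = D_j(y) = \zero$ merely says $x_j = f(x)_j$ and $y_j = f(y)_j$; it says nothing relating $x_j$ to $y_j$. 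So $x$ and $y$ need not be comparable, the pair $(x, y)$ need not witness a monotonicity violation, and your argument does not produce a solution of $\mathcal{U}$ in this case.

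The paper's route around this is worth contrasting with yours. From the \solnref{OV2} data it deduces only the one-point facts $x < f_s(x)$ and $f_s(y) < y$. It then performs an explicit monotonicity check: either $f_s(x) \not\le f_s(f_s(x))$ (giving a \solnref{UTV1} violation from the comparable pair $x \le f_s(x)$), or $f_s(x) \in \Up(f_s)$; symmetrically for $y$. If neither check fires, it compares $f_s(x)$ and $f_s(y)$ in coordinate $i$ alone: $f_s(y)_i \le y_i - 1 = x_i < f_s(x)_i$, so $f_s(x) \not\le f_s(y)$, and the pair $(f_s(x), f_s(y)) \in \Up(f_s) \times \Down(f_s)$ is a \solnref{UTV2} violation. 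Note that the target solution type is \solnref{UTV2} (non-uniqueness on the slice), not \solnref{UTV1}, except in the fallback branches. Your proposal could be repaired along these lines, but as written the \solnref{OV2} case (and the part of your promise-preservation argument that relies on it) does not go through.
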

\begin{proof}
We enumerate the four possible solutions to the \OPDC instance.
\begin{enumerate}
\item Solutions of type \solnref{O1} give us a point $x \in L^k$ such that $D_i(x)
= \zero$ for all $i$. By definition, this implies that $x_i = f_i(x)$ for all
$i$, which implies that $x = f(x)$. So $x$ is a fixed point of $f$, and thus a
solution of type \solnref{UT} of $\mathcal{U}$.

\item
Solutions of type \solnref{OV1} give us a slice $s$ and two points $x, y \in
L^k_s$ with $x \ne y$ such that $D_i(x) = D_i(y) = \zero$ for all $i$ with $s_i =
\blank$. Using the same reasoning as the first case, we can conclude that $x$
and $y$ are two distinct fixed points of $f_s$. All fixed points of $f_s$ are
simultaneously members of both $\Up(f_s)$ and $\Down(f_s)$ by definition, and
since $x$ and $y$ are distinct, it cannot be the case that we have both $x \le y$ and $y
\le x$. Thus $x$ and $y$ are a solution of type \solnref{UTV2} of
$\mathcal{U}$. 

\item 
Solutions of type \solnref{OV2}
give us a slice $s$, two points $x, y \in L^k_s$
satisfying $y_i = x_i + 1$, and an index $i$ such that
$D_j(x) = D_j(y) = \zero$ for all $j \ne i$, and
$D_i(x) = \up$ while $D_i(y) = \down$.
The following algorithm can produce a solution of $\mathcal{U}$.
\begin{enumerate}
\item First we observe the following inequalities, which follow by definition.
We have that $x_j = f(x)_j$ for all $j \ne i$, and $x_i < f(x)_i$ for index $i$,
which implies that $x < f_s(x)$. A symmetric argument likewise proves that
$f_s(y) < y$. 

\item 
Since $x < f_s(x)$ we either have that $f_s(x) \le f_s(f_s(x))$, or $x$ and
$f_s(x)$ give us a violation of monotonicity and thus a solution of type
\solnref{UTV1}, where we note that two points that violate monotonicity
for $f_s$ also violate monotonicity for $f$. 
Applying the same argument symmetrically to $y$ allows us to either obtain a
violation of type \solnref{UTV1}
or conclude that $f_s(f_s(y)) \le f_s(y)$.

\item
So assuming that we are not already done, we have that $f_s(x) \in \Up(f_s)$ and
$f_s(y) \in \Down(f_s)$. Observe that for index $i$ we have $f_s(y) \le y_i - 1 = x_i
< f_s(x)$, and so it must be the case that $f_s(x)
\not\le f_s(y)$. Hence $f_s(x)$ and $f_s(y)$ are a solution of type
\solnref{UTV2}.
\end{enumerate}
This algorithm runs in polynomial time, and always produces a violation solution
of $\mathcal{U}$.

\item We argue that solutions of type \solnref{OV3} are impossible.
To see why, note that if $x \in L^k$ is a point, and $x_i = 1$, then $f(x)_i \ge
1$ since $f(x)$ also lies in $L^k$. Hence we cannot have $D_i(x) = \down$.
Likewise if $x_i = n$ then $f(x)_i \le n$, and so we cannot have $D_i(x) = \up$.
Hence, no solutions of type \solnref{OV3} exist.

\end{enumerate}
Note that in the mapping above we have that and all proper solutions of
$\mathcal{O}$ are mapped to proper solutions of $\mathcal{U}$, while violation
solutions $\mathcal{O}$ are mapped to violation solutions of $\mathcal{U}$.
\end{proof}

We remark that nowhere in the lemma above did we use the fact that $s$ was an
$i$-slice. So our reduction would also work if we were to reduce to an
\emph{all-permutation} version  of \OPDC, in which each of the violations were
defined to hold for any slice $s$, rather than just an $i$-slice $s$.

The lemma above, combined with the containment of \OPDC in UEOPL, implies that we have shown the following theorem.

\begin{theorem}
\sutarski is in UEOPL under promise-preserving reductions.
\end{theorem}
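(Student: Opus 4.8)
The plan is to obtain the theorem as an immediate consequence of the reduction just constructed together with the known containment of \OPDC in UEOPL. Concretely, I would proceed in three steps.

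First, I would verify that the map $\mathcal{U} \mapsto \mathcal{O}$ described above is a genuine polynomial-time many-one reduction in the sense required for the class UEOPL: the circuits $D_i$ are built from the circuit for $f$ by a fixed local gadget (compare $x_i$ with $f(x)_i$ and output $\up$, $\zero$, or $\down$), so their size is polynomial in the size of the input and they are computable in polynomial time; and the preceding lemma supplies the second half of a reduction, namely a polynomial-time map sending any solution of $\mathcal{O}$ back to a solution of $\mathcal{U}$. Together these two maps constitute the reduction.

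Second, I would check that the reduction is \emph{promise-preserving}, which is the only subtlety. By the lemma, every violation solution of $\mathcal{O}$ (types \solnref{OV1}, \solnref{OV2}, \solnref{OV3}) is mapped to a violation solution of $\mathcal{U}$ (types \solnref{UTV1} or \solnref{UTV2}); equivalently, if $\mathcal{O}$ had a violation then so would $\mathcal{U}$. Taking the contrapositive: if $\mathcal{U}$ satisfies its promise (no violations, i.e.\ $f$ is monotone and every slice has a unique fixed point), then $\mathcal{O}$ has no violations either, so $\mathcal{O}$ satisfies the \OPDC promise. This is exactly the promise-preservation condition.

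Third, I would invoke the cited fact that \OPDC lies in UEOPL under promise-preserving reductions, and compose: \sutarski reduces, promise-preservingly and in polynomial time, to \OPDC, which in turn reduces promise-preservingly to the canonical UEOPL-complete problem. Since promise-preserving polynomial-time reductions compose, \sutarski is in UEOPL under promise-preserving reductions, which is the statement. I do not expect any real obstacle here: the entire content of the theorem has already been discharged by the reduction lemma, and the remaining work is just bookkeeping — confirming polynomial size/time of the $D_i$ gadget, confirming the direction of the promise implication, and confirming that composition of promise-preserving reductions stays promise-preserving. If anything merits a sentence of care, it is the promise direction (one must reason with the contrapositive of the "violations map to violations" clause rather than with solutions directly).
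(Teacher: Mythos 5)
Your proposal is correct and matches the paper's own argument, which likewise derives the theorem immediately from the reduction lemma (violations of $\mathcal{O}$ map to violations of $\mathcal{U}$, giving promise-preservation by contraposition) together with the UEOPL-containment of \OPDC. The paper states this in a single sentence, while you spell out the bookkeeping, but the content is identical.
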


\section{Acknowledgement}

We thank Takashi Ishizuka for spotting that our original proof of Lemma~\ref{lem:non-unique} was
incomplete.

\bibliographystyle{abbrv}
\bibliography{references}

\end{document}